\date{}
\newenvironment{proof}{\noindent \emph{Proof.}}{\hspace*{\fill} $\Box$\newline}
\newif\ifNoRemark
\def\addtheorem#1#2#3#4{
\ifthenelse{\equal{#2}{}}{}%
{\ifthenelse{\expandafter\isundefined\csname the#2\endcsname}{\newcounter{#2}}{}}
\newenvironment{#1}[1][\global\NoRemarktrue]
{\par\addvspace{2mm plus 0.5mm minus 0.2mm}\noindent 
{\bf #3}\ifthenelse{\equal{#2}{}}{}%
{\refstepcounter{#2}{\bf ~\csname the#2\endcsname}}%
{\bf \vphantom{##1}\ifNoRemark.\ \else\ (##1).\fi}\begingroup #4}%
{\endgroup\par\addvspace{1mm plus 0.5mm minus 0.2mm}\global\NoRemarkfalse}
\expandafter\newcommand\csname b#1\endcsname{\begin{#1}}
\expandafter\newcommand\csname e#1\endcsname{\end{#1}}
}
\newcommand{\word}[1]{\ensuremath{\mathbf{#1}}}
\begin{document}

\title{Non-existence of a ternary constant weight $(16,5,15;2048)$ diameter perfect code%
\thanks{The first author was partially supported by 
Grant 13-01-00463-a of 
the Russian Foundation for Basic Research 
and by Grant NSh–1939.2014.1 of President of
Russia for Leading Scientific Schools.
The third author was partially supported by the Finnish Cultural Foundation.}}
\author{%
Denis S. Krotov%
  \thanks{D.S.K.\ is with Sobolev Institute of Mathematics, Novosibirsk, Russia
  and with Novosibirsk State University, Novosibirsk, Russia},
Patric R. J. {\"O}sterg\aa{}rd%
  \thanks{P.R.J.\"O.\ is with the Department of Communications and Networking,
  Aalto University School of Electrical Engineering, P.O.\ Box 13000, 00076 Aalto,
  Finland}, 
Olli Pottonen%
  \thanks{O.P.\ is with the School of Mathematics and Physics, The University of Queensland,
Brisbane, Australia.}
}

\maketitle

\begin{abstract}
Ternary constant weight codes of length $n=2^m$, weight $n-1$,
cardinality $2^n$ and distance $5$ are known to exist 
for every $m$ for which there exists an APN permutation of order $2^m$,
that is, at least for all odd $m \geq 3$ and for $m=6$.
We show the non-existence of such codes for $m=4$ 
and prove that any codes with the parameters above are 
diameter perfect.
\end{abstract}

\noindent
{\bf AMS Subject Classification:} 94B25\\
{\bf Keywords:} constant weight code, diameter perfect code

\section{Introduction}

A \emph{ternary constant-weight code} $C$ of length $n$, 
weight $w$, and minimum distance at least $d$, or an $(n,d,w;M)_3$ code,
is a set of $M$ words (codewords) over the alphabet $\{0,1,2\}$ 
with exactly $n-w$ 0s
such that every two distinct codewords differ in at least $d$ coordinates.
If $w=n$, then such a code is an unrestricted binary code over
the alphabet $\{1,2\}$. We consider the case $w=n-1$, which is, as we will see, 
also connected to binary codes.
For this reason, it is convenient to replace the alphabet 
$\{0,1,2\}$ by $\{*,0,1\}$ and consider words of length $n$ 
that contain exactly one $*$; the set of all such words is 
denoted by $X^n$, and the set of all binary words of
length $n$ is denoted by $F^n$.

The (Hamming) distance $d(\word{x},\word{y})$ 
between two words $\word{x},\word{y} \in X^n \cup F^n$ is the number of coordinates 
in which they differ. A word $\word{x} \in X^n$ will also be treated as a pair
of binary words differing 
in exactly one coordinate,
for example, $01{*}0 = \{0100,0110\}$.
One of these words is an even-weight word (that is, the number of 1s is even)
and the other is an odd-weight word;  
these are denoted by $e(\word{x})$ and $o(\word{x})$, respectively.
Moreover, we define $e(C)=\{e(\word{c}) \mid \word{c} \in C\}$ and 
$o(C)=\{o(\word{c}) \mid \word{c} \in C\}$.

We are here interested in the class of ternary constant weight codes 
with parameters $(n=2^m,5,n-1;2^{n-1}/n)_3$.
The first nontrivial example of such a code 
was an $(8,5,7;16)_3$ code constructed from a Jacobsthal matrix
\cite{Svanstr:PhD,OstSva2002}.
As proved in \cite{Kro:TernDiamPerf}
the existence of a so-called APN (almost perfect nonlinear) permutation 
$\{0,1\}^m\to \{0,1\}^m$ implies the existence of an
$(n=2^m,5,n-1;2^{n-1}/n)_3$ code $C$, 
where the codes $e(C)$ and $o(C)$
are cosets of binary extended Hamming codes.
APN permutations exist for every odd $m$ and for $m=6$ (see \cite{BDMW:2010}) 
and do not exist for $m=2,4$. The two exceptional values leave
the existence of $(4,5,3;2)_3$ and $(16,5,15;2048)_3$ codes open. 
Non-existence is trivial in the former case. The latter case
is the topic of the current note, where the following main result
will be obtained.

\begin{theorem}\label{th:16}
A $(16,5,15;2048)_3$ ternary constant weight code does not exist.
\end{theorem}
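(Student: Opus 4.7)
My plan has two stages: first, force strong structural constraints on any hypothetical code $C$ with these parameters—specifically that $e(C)$ and $o(C)$ are cosets of extended Hamming codes; then exploit this structure to reduce to the known non-existence of an APN permutation of $\mathbb{F}_2^4$.

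For the structural stage, I first verify that $e, o \colon C \to F^{16}$ are injective. If $e(\word{c}) = e(\word{c}')$ for distinct $\word{c}, \word{c}' \in C$, they must carry $*$ in distinct positions $i \ne j$, and a direct check gives $d(\word{c}, \word{c}') = 2$, contradicting $d \ge 5$; hence $|e(C)| = |o(C)| = |C| = 2048$. Moreover, since $d(e(\word{c}), e(\word{c}'))$ and $d(\word{c}, \word{c}')$ differ by at most $2$ (one per $*$-position), $e(C)$ has minimum distance at least $3$, and at least $4$ by parity since all its words are even-weight. Puncturing any coordinate produces a length-$15$ binary code of size $2048$ and minimum distance at least $3$, which saturates the sphere-packing bound $2^{15}/16$; the unique extremal code is a coset of the $[15,11,3]$ Hamming code. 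Hence $e(C)$ is a coset of some extended Hamming $[16,11,4]$ code $H_E$ in $\mathbb{F}_2^{16}$, and symmetrically $o(C)$ is a coset of some $H_O$.

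For the second stage, I analyze the bijection $\sigma \colon e(C) \to o(C)$ defined by $\sigma(e(\word{c})) = o(\word{c})$. For each $\word{x} \in e(C)$, $\sigma(\word{x}) - \word{x}$ is a unit vector $\word{e}_{i(\word{x})}$ whose support is the $*$-position of the corresponding codeword; and for distinct $\word{x}_1, \word{x}_2$ the difference $\sigma(\word{x}_1) - \sigma(\word{x}_2)$ lies in $H_O$ while $\word{x}_1 - \word{x}_2$ lies in $H_E$, forcing $\word{e}_{i(\word{x}_1)} + \word{e}_{i(\word{x}_2)} \in H_E + H_O$. Combined with the distance-$5$ constraint on $C$, this severely restricts both the pair $(H_E, H_O)$ and the labeling $i$. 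The goal is then to show, under the standard identification of the $16$ coordinates with $\mathbb{F}_2^4$ that underlies the Hamming codes, that $i$ encodes an APN permutation of $\mathbb{F}_2^4$, effectively inverting the construction of \cite{Kro:TernDiamPerf}; since no such permutation exists for $m=4$, no code $C$ can exist. The key difficulty is precisely this last extraction step, since the converse of the APN-to-code construction is not automatic and requires a careful algebraic analysis of how $i$ interacts with the parity-check structures of $H_E$ and $H_O$; if that direct route proves too delicate, the structural stage already leaves only finitely many configurations $(H_E, H_O, i)$ modulo $\mathrm{Aut}(H)$ on the coordinates, and an explicit computer enumeration provides a viable backup endgame.
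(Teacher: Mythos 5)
Your proposal founders at the very first structural step. From $d(C)\ge 5$ you correctly deduce that $e(C)$ and $o(C)$ are binary $(16,2048,4)$ codes (this is the content of Lemmas~\ref{lem:comb1} and~\ref{l:comb2} in the paper), but the claim that a length-$15$ code of size $2048$ and minimum distance $3$ meeting the sphere-packing bound must be a coset of the $[15,11,3]$ Hamming code is false. Perfect single-error-correcting binary codes of length $15$ are wildly non-unique: Vasil'ev-type and many other nonlinear constructions exist, and the complete classification \cite{OstPot:15} --- which is precisely the starting point of the paper's own proof --- shows there are $2165$ inequivalent extended perfect codes of length $16$ (and $5983$ inequivalent perfect codes of length $15$). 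Attaining the sphere-packing bound gives a partition of the space into balls, not linearity. Consequently your entire second stage collapses: $H_E$ and $H_O$ need not be (cosets of) extended Hamming codes, there is no canonical identification of the coordinates with $\mathbb{F}_2^4$, and the reduction to the non-existence of an APN permutation of $\mathbb{F}_2^4$ --- which you already flag as delicate even under your (false) structural hypothesis, since the converse of the construction in \cite{Kro:TernDiamPerf} is not automatic --- has no foundation. Your fallback of enumerating configurations $(H_E,H_O,i)$ modulo $\mathrm{Aut}(H)$ inherits the same flaw: the true search space is all $2165$ extended perfect codes for $e(C)$, paired with compatible choices of $o(C)$ that cannot be restricted to equivalence-class representatives.

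For comparison, the paper's proof accepts this large search space head-on: it fixes $e(C)$ to each of the $2165$ classified representatives and searches for $o(C)$ as a solution to an exact cover problem (each even-weight word must be at distance $1$ from exactly one word of $o(C)$), pruned by the distance-$5$ compatibility condition of Lemma~\ref{l:comb2}(2) and staged through subproblems corresponding to Steiner triple and quadruple systems. No solution exists, which proves Theorem~\ref{th:16}. If you want to salvage your approach, you would have to either prove (rather than assert) that the distance-$5$ ternary constraint forces $e(C)$ and $o(C)$ to be linear cosets --- which is not known and is unlikely to be true a priori --- or accept that a substantial computation over all nonlinear perfect codes is unavoidable.
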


The paper is organized as follows.
In Section~\ref{s:computation} we describe a computational proof of
Theorem~\ref{th:16}.
In Section~\ref{s:Dperfect} we define 
the concept of a diameter perfect code and prove that 
a diameter perfect ternary constant weight code 
of length $n=2^m$, weight $n-1$
and minimum distance $5$ must have cardinality
$2^{n-1}/n$.

\begin{corollary}\label{cor:16}
There are no diameter perfect ternary constant weight codes of length $16$, 
weight $15$ and minimum distance $5$.
\end{corollary}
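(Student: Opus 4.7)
The plan is to obtain the corollary by a direct combination of the two main results of the paper. By Theorem~\ref{th:16}, no $(16,5,15;2048)_3$ ternary constant weight code exists, and by the cardinality result advertised for Section~\ref{s:Dperfect}, any diameter perfect ternary constant weight code of length $n=2^m$, weight $n-1$, and minimum distance $5$ necessarily has exactly $2^{n-1}/n$ codewords. Plugging in $m=4$, so that $n=16$, gives the cardinality $2^{15}/16 = 2048$.

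Concretely, I would proceed as follows. First, suppose for contradiction that $C$ is a diameter perfect ternary constant weight code of length $16$, weight $15$, and minimum distance $5$. The diameter perfection, together with the cardinality result from Section~\ref{s:Dperfect}, forces $|C| = 2048$, so $C$ is a $(16,5,15;2048)_3$ code. Second, invoke Theorem~\ref{th:16} to conclude that no such code exists, giving the desired contradiction.

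There is no real obstacle here once both ingredients are available; the entire content of the corollary lies in the two results being combined, and the argument itself is a one-line deduction. The only thing to double-check is the arithmetic $2^{15}/16 = 2048$, confirming that the cardinality formula from Section~\ref{s:Dperfect} yields precisely the parameter set ruled out by Theorem~\ref{th:16}.
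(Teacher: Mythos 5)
Your proposal is correct and matches the paper's own derivation exactly: the paper proves Corollary~\ref{cor:16} by noting that Theorem~\ref{t:diperfect} forces a hypothetical diameter perfect code in $X^{16}$ to have cardinality $2^{15}/16=2048$, which Theorem~\ref{th:16} rules out. Nothing is missing.
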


\section{Non-existence of $(16,5,15;2048)_3$ code}\label{s:computation}

Every ternary code of length $n$ and weight $n-1$ can be decomposed
into binary even and odd codes, as described in the Introduction, but 
even and odd binary codes cannot in general be combined to get a ternary code. 
However, for the parameters in question they can, as the next lemma shows. 
An $(n,M,d)$ code is a binary code of length $n$, size $M$ and minimum distance $d$.

\begin{lemma}\label{lem:comb1}
For any even code $C_0$ and odd code $C_1$, both $(n=2^m,2^{n-1}/n,4)$ codes, there is
an $(n=2^m,3,n-1;2^{n-1}/n)_3$ code $C$ such that $C_0 = e(C), C_1 = o(C)$.
\end{lemma}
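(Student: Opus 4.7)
The plan is to construct $C$ via a perfect matching in the bipartite graph $G$ whose vertex classes are $C_0$ and $C_1$ and whose edges are pairs of binary words at Hamming distance exactly $1$. I would first show that $G$ is in fact $1$-regular, so that $G$ is itself a perfect matching.

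To see this, fix an even-weight word $\mathbf{v} \in F^n$; if two distinct $\mathbf{c}_1, \mathbf{c}_1' \in C_1$ were both at distance $1$ from $\mathbf{v}$, then $d(\mathbf{c}_1, \mathbf{c}_1') \leq 2$, contradicting the minimum distance $4$ of $C_1$. So each even word has \emph{at most} one neighbor in $C_1$. Now double-count the pairs $(\mathbf{v}, \mathbf{c}_1)$ with $\mathbf{v}$ even and $d(\mathbf{v}, \mathbf{c}_1) = 1$: from the $C_1$ side this count is $n \cdot |C_1| = n \cdot 2^{n-1}/n = 2^{n-1}$, which is exactly the number of even-weight words in $F^n$. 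Hence every even word, and in particular every $\mathbf{c}_0 \in C_0$, has \emph{exactly} one neighbor in $C_1$. The symmetric argument applied to odd-weight words and $C_0$ shows that every $\mathbf{c}_1 \in C_1$ has exactly one neighbor in $C_0$, so $G$ is $1$-regular, hence a perfect matching.

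For each matched pair $\{\mathbf{c}_0, \mathbf{c}_1\}$ I would define a word $\mathbf{x} \in X^n$ by placing $*$ at the unique coordinate where $\mathbf{c}_0$ and $\mathbf{c}_1$ disagree and filling in the remaining $n - 1$ coordinates with the common values of $\mathbf{c}_0$ and $\mathbf{c}_1$; by construction $e(\mathbf{x}) = \mathbf{c}_0$ and $o(\mathbf{x}) = \mathbf{c}_1$. Taking $C$ to be the set of all such $\mathbf{x}$ yields $e(C) = C_0$, $o(C) = C_1$, and $|C| = |C_0| = 2^{n-1}/n$.

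The remaining step is a short case analysis to check that distinct $\mathbf{x}, \mathbf{y} \in C$ are at distance at least $3$. When their $*$-coordinates coincide at some $i$, the words $e(\mathbf{x}), e(\mathbf{y}) \in C_0$ are distinct and therefore at distance at least $4$; dropping coordinate $i$ costs at most one agreement, so $d(\mathbf{x}, \mathbf{y}) \geq 3$. When the $*$-coordinates $i \neq j$ differ, those two coordinates contribute $2$ to $d(\mathbf{x}, \mathbf{y})$ and on the remaining coordinates $\mathbf{x}, \mathbf{y}$ agree with $e(\mathbf{x}), e(\mathbf{y})$, so $d(\mathbf{x}, \mathbf{y}) \geq 2 + d(e(\mathbf{x}), e(\mathbf{y})) - 2 \geq 4$. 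The only genuine step is the counting argument in the second paragraph; its content is that codes meeting the bound $|C_i| = 2^{n-1}/n$ with minimum distance $4$ are forced to be perfect $1$-coverings of the opposite parity class, and this rigidity immediately yields the matching.
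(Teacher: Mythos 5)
Your proposal is correct and follows essentially the same route as the paper: the key step in both is the counting argument showing that the $2^{n-1}/n$ codewords of $C_1$, each adjacent to $n$ even-weight words, account for all $2^{n-1}$ even-weight words exactly once, so the distance-$1$ relation between $C_0$ and $C_1$ is a perfect matching. Your version merely phrases the disjoint-balls count as a double count and spells out the final distance-$3$ verification that the paper leaves as ``easy to check.''
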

\begin{proof}
It suffices to show that for any $\word{x} \in C_0$, there is unique $\word{y} \in C_1$
with $d(\word{x},\word{y}) = 1$. 
For any $\word{y} \in C_1$, the ball 
$B(\word{y}) = \{\word{z} \in F^n \mid d(\word{y},\word{z}) \leq 1\}$
contains $n$ even words. For distinct words $\word{y},\word{z} \in C_1$, we have
$B(\word{y}) \cap B(\word{z}) = \emptyset$ as $C_1$ has minimum distance $4$, 
so $\cup_{\word{y} \in C_1}B(\word{y})$ has cardinality $(2^{n-1}/n)\cdot n = 2^{n-1}$
and thereby contains all even words of $F^n$.

Therefore, for every even $\word{x} \in F^n$, there is unique $\word{y} \in C_1$ 
with $d(\word{x},\word{y}) = 1$. Obviously, this holds for $\word{x} \in C_0$. 
Finally, it is easy to check the claimed minimum distance.
\end{proof}

The $(n=2^m, 2^{n-1}/n, 4)$ codes are known as extended 1-perfect binary codes.
This construction only guarantees minimum distance 3. With an additional
restriction we get minimum distance 5.

\begin{lemma}\label{l:comb2}
Let $n=2^m$. A set $C\subset X^n$ is an $(n,5,n-1;2^{n-1}/n)_3$ code
if and only if the following two conditions holds.

{\rm (1)} Both $e(C)$ and $o(C)$ are binary $(n,2^{n-1}/n,4)$ codes.

{\rm (2)} If $\word{x}_1, \word{x}_2 \in e(C)$ and $\word{y}_1, \word{y}_2 \in o(C)$ satisfy
$d(\word{x}_1, \word{y}_1) = d(\word{x}_2, \word{y}_2) = 1$, $d(\word{x}_1, \word{x}_2) = 4$, then
$\word{x}_1 - \word{x}_2 \neq \word{y}_1 - \word{y}_2$.
\end{lemma}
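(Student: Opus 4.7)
The plan is to prove both directions of the equivalence by a case analysis based on whether two distinct codewords $\word{c}_1, \word{c}_2 \in C$ share their $*$-positions; write $j_i$ for the $*$-position of $\word{c}_i$. A direct tally of positions yields the decomposition
\[
d(\word{c}_1,\word{c}_2) = d(e(\word{c}_1),e(\word{c}_2)) = d(o(\word{c}_1),o(\word{c}_2))
\]
when $j_1 = j_2$, and the identity
\[
d(e(\word{c}_1),e(\word{c}_2)) + d(o(\word{c}_1),o(\word{c}_2)) = 2\,d(\word{c}_1,\word{c}_2) - 2
\]
when $j_1 \neq j_2$, since each $*$-position contributes $1$ to the ternary distance but cancels between the $e$- and $o$-halves. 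This identity drives the entire argument.

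For the forward direction, I would first verify that both $e$ and $o$ are injective on $C$: a failure would place two codewords at ternary distance $2$, contradicting the minimum distance $5$. Hence $|e(C)| = |o(C)| = 2^{n-1}/n$, and the bounds $d(e(\word{c}_1),e(\word{c}_2)) \geq 4$ and $d(o(\word{c}_1),o(\word{c}_2)) \geq 4$ follow by combining the decomposition with parity (the Hamming distance between two words of equal weight parity is even). For (2), I would invoke the covering observation from the proof of Lemma \ref{lem:comb1}: every even-weight word sits at distance $1$ from a unique codeword of $o(C)$. This forces $\word{y}_i = o(\word{c}_i)$, where $\word{c}_i$ is the unique element of $C$ with $e(\word{c}_i) = \word{x}_i$. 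The subcase $j_1 = j_2$ would yield $d(\word{c}_1,\word{c}_2) = 4$, impossible by minimum distance; hence $j_1 \neq j_2$, and a one-line calculation shows $\word{y}_1 - \word{y}_2$ equals $\word{x}_1 - \word{x}_2$ with the two positions $j_1, j_2$ flipped, so the differences are unequal.

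For the converse, assume (1) and (2). The same non-injectivity argument (now contradicting (1) in place of the minimum distance) shows $e$ and $o$ are injective on $C$, so $|C| = 2^{n-1}/n$ and the images of distinct codewords are distinct. To bound the minimum distance, consider distinct $\word{c}_1, \word{c}_2 \in C$. When $j_1 \neq j_2$, the identity combined with the two lower bounds $\geq 4$ from (1) forces $d(\word{c}_1,\word{c}_2) \geq 5$ at once. When $j_1 = j_2 =: j$, the ternary distance equals $d(e(\word{c}_1),e(\word{c}_2))$, which is even and at least $4$; the borderline value $4$ is ruled out by (2) applied with $\word{x}_i := e(\word{c}_i)$ and $\word{y}_i := o(\word{c}_i)$, for which $\word{y}_1 - \word{y}_2 = \word{x}_1 - \word{x}_2$ since both $\word{y}_i$ differ from $\word{x}_i$ only at the common coordinate $j$. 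Hence $d(\word{c}_1,\word{c}_2) \geq 6$ in this subcase, and the minimum distance is at least $5$ overall.

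The argument is elementary and no step presents a real obstacle. The point worth flagging is that condition (2) is used \emph{only} in the subcase $j_1 = j_2$: in the other subcase the identity lets the two Hamming-distance bounds from (1) combine by themselves to exclude ternary distance at most $4$, which one might have expected (2) to be needed for.
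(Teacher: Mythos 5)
Your route---splitting on whether two codewords share their $*$-position and converting between ternary and binary distances---is essentially the paper's, and the forward direction and the $j_1\neq j_2$ subcase of the converse go through. But your first ``decomposition'' is false, and the error is load-bearing in the converse. When $j_1=j_2=j$, the even representatives need not agree at coordinate $j$: the value $e(\word{c}_i)(j)$ is dictated by the parity of the binary part of $\word{c}_i$, so in fact $d(e(\word{c}_1),e(\word{c}_2))=d(\word{c}_1,\word{c}_2)+1$ whenever $d(\word{c}_1,\word{c}_2)$ is odd. Concretely, $\word{c}_1={*}000\cdots0$ and $\word{c}_2={*}1110\cdots0$ have ternary distance $3$ while $d(e(\word{c}_1),e(\word{c}_2))=d(00\cdots0,\,11110\cdots0)=4$. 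Hence your converse step ``the ternary distance equals $d(e(\word{c}_1),e(\word{c}_2))$, which is even and at least $4$'' silently discards the possibility $d(\word{c}_1,\word{c}_2)=3$, which is compatible with condition (1) and must be excluded. The repair uses exactly the tool you already deployed for distance $4$: if $j_1=j_2=j$ and $d(\word{c}_1,\word{c}_2)\in\{3,4\}$, parity forces $d(e(\word{c}_1),e(\word{c}_2))=4$, and since each $o(\word{c}_i)$ is $e(\word{c}_i)$ with coordinate $j$ flipped, the differences $e(\word{c}_1)-e(\word{c}_2)$ and $o(\word{c}_1)-o(\word{c}_2)$ coincide modulo $2$, so condition (2) excludes both values at once. (This forces the differences in (2) to be read modulo $2$: in the distance-$3$ configuration the two flips at $j$ go in opposite directions, so over the integers $x_1-x_2\neq y_1-y_2$ and (2) would not be violated; your ``one-line calculation'' for distance $4$ works over the integers only because there $x_1(j)=x_2(j)$.)

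A smaller point: in the forward direction, the sum identity $d(e(\word{c}_1),e(\word{c}_2))+d(o(\word{c}_1),o(\word{c}_2))=2\,d(\word{c}_1,\word{c}_2)-2$ (which is correct) does not by itself give each summand at least $4$ even with parity---it allows $2+6$. You need the one-sided bound $d(e(\word{c}_1),e(\word{c}_2))\geq d(\word{c}_1,\word{c}_2)-2\geq 3$ (the two $*$-positions are the only coordinates that can ``heal''), after which parity finishes. With these two fixes the argument is complete and matches the paper's.
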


\begin{proof}
Lemma \ref{lem:comb1} takes care of all parts but the minimum distance
and Condition (2). Two codewords in $C$ either have the $*$ in the same
coordinate or in different coordinates. In the former case, the words
in $e(C)$ obtained from the codewords will be at distance at least 5
from each other, that is, at least 6 since the distance is even. Then
the If part of Condition (2) is not fulfilled. In the latter case, 
$\word{x}_1-\word{y}_1 \neq \word{x}_2-\word{y}_2$, since these
differences give a word with a 1 in the coordinate of the $*$. These 
argument can be reversed, so equivalence holds.
\end{proof}

Next we consider symmetries of the space $F^n$. A permutation $\pi$ of
the set $\{1, 2, \ldots, n\}$ acts on words by permuting coordinates:
$\pi((c_1, \ldots, c_n)) = (c_{\pi^{-1}(1)}, \ldots,
c_{\pi^{-1}(n)})$.  A pair $(\pi, \word{x})$, $\word{x} \in F_n$ acts on $\word{c}$ as $(\pi,
\word{x})(\word{c}) = \pi(\word{c} + \word{x}) = \pi(\word{c}) + \pi(\word{x})$. 
These actions are distance-preserving, that is, they are \emph{isometries}. 
Two codes $C_1, C_2$ are \emph{equivalent} if $C_1 = (\pi, \word{x})(C_2)$ 
for some $\pi, \word{x}$.  A mapping $(\pi, x)$ that fulfills
$C = (\pi, \word{x})(C)$ is an \emph{automorphism}
of $C$. An automorphism of type $(\pi,\mathbf{0} = 00\cdots 0)$
is a \emph{symmetry} of $C$.

To determine existence of $(16,5,15;2048)_3$ codes,
we want to find, up to equivalence, all pairs $C_0, C_1$
such that $C_0 = e(C), C_1 = o(C)$ in
Lemma~\ref{l:comb2}. As starting point, we have the complete
classification of $(16,2048, 4)$ codes~\cite{OstPot:15}. Specifically, we
have an exhaustive list of 2165 equivalence class
representatives
\footnote{Available at arXiv:0806.2513 and {\tt http://www.iki.fi/opottone/codes}}.
Without loss of generality, we may fix $C_0$ to one of the codes
in this list. Then $C_0$ is an even code and
${\mathbf 0} \in C_0$.

However, for $C_1$ we were not able to utilize the classification, since
we are not free to choose arbitrary equivalence class representatives, and iterating 
over all possible codes is not feasible. Instead we search for $C_1$ with a direct approach.

Let $W_0$, $W_1$ consist of all even and odd weight words of $F^{16}$,
respectively. By the argument in the proof of
Lemma~\ref{lem:comb1}, for each $\word{x} \in W_0$ there is a unique $\word{y} \in
C_1$, with $d(\word{x}, \word{y}) = 1$. We shall search for a set $C_1$ which
satisfies this condition.

The search can be formulated as an instance of the \emph{exact cover problem}. 
In the exact cover problem we have sets $S$ and $U$ and a relation $R
\subset S \times U$. The task is to find a subset $Q \subset
U$ such that for any $x \in S$ there is unique $u \in
Q$ with $(x,u) \in R$. In our case $S = W_0$, $U = W_1$ 
and $R = \{ (\word{x}, \word{y}) \mid \word{x} \in W_0, 
\word{y} \in W_1, d(\word{x}, \word{y}) = 1\}$.

We use Condition (2) from Lemma~\ref{l:comb2} to prune the search.
Not only does this restrict the search and speed it up by many orders of magnitude, 
but it also guarantees that the results are relevant to the problem at hand.

For further improvement, we can attack this problem by first
solving some of its subproblems. Given an instance of the exact cover problem
with parameters $S$, $U$ and $R$, consider a set $S' \subset S$, and let 
$U' = \{u \in U \mid
(x,u) \in R \mbox{ for some } x \in S'\}$
and $R' = R|_{S' \times U'}$. 
Now we can first find all solutions $Q'$ of the instance with parameters 
$S', U', R'$, and then extend those in all possible
ways to solutions $Q \supset Q'$ of the original instance with parameters 
$S, U, R$. In particular, if the subproblem
has no solutions, neither has the original instance.

We emphasize that the approach does not
affect the solutions found, it only affects the performance of the
algorithm. We are free to choose $S'$ arbitrarily and may proceed
via a sequence of subproblems, for $S' \subset S'' \subset S''' \subset \cdots \subset S$.

We choose $S' = \{{\mathbf 0}\}$. Let $\word{e}^i$ be the 
word of weight 1 with a 1 in the $i$th coordinate. There are
clearly 16 solutions for the instance induced by $S'$, one for
each possible word $\word{e}^i$. Note, however, that it suffices to consider 
one solution from each orbit of the symmetry group of $C_0$. 
Specifically, if there is a permutation $\pi$ such that
$\pi(C_0) = C_0$ and $\pi(Q_1) = Q_2$ for two solutions $Q_1$ and $Q_2$,
then one of the solutions can be ignored. This observation was not
used for larger subproblems.

The subsequent subproblems are defined based on the solution of
$S'$, $\word{e}^i$. We let 
$S'' = \{\word{x} \in W_0 \mid d(\word{x}, \word{e}^i) = 3, x_1 \neq e^i_1 \}$,
where a subindex indicates the value of that particular coordinate.
Finally, $S''' = \{\word{x} \in W_0 \mid d(\word{x}, \word{e}^i) = 3 \}$.

A reader familiar with design theory may note that a solution 
for $S''$ is equivalent to a Steiner triple system of order $n-1$, and a
solution for $S'''$ is equivalent to a Steiner quadruple system of order $n$.

The following numerical values depend on the representatives of the equivalence
classes of codes; the codes from {\tt http://www.iki.fi/opottone/codes} were
used here. Out of the $2165$ codes $C_0$, only $102$ admit a solution to some
of the exact cover instances induced by $S''$. In no case
is there a solution for the instances induced by $S'''$. The computation took
about 24 CPU hours on a modern laptop. The exact cover instances were
solved with the \texttt{libexact} library~\cite{KasPot08}. The symmetries of codes
were computed with \texttt{bliss}~\cite{JunKas07}.

\section{Diameter perfect codes}\label{s:Dperfect}

This section is devoted to proving the following result.
\begin{theorem}\label{t:diperfect}
Let $n = 2^m$, $m \geq 3$. A diameter perfect $(n,5,n-1;M)_3$ code satisfies 
$M=2^{n-1}/n$.
\end{theorem}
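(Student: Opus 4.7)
The plan is to apply the code--anticode inequality $|C|\cdot|A|\le|X^n|$ in the metric space $X^n$, which has cardinality $n\cdot 2^{n-1}$. By the definition of a diameter perfect code this bound is met with equality against a maximum anticode $A^*$ of diameter $d-1=4$, so $|C|=|X^n|/|A^*|$. The theorem therefore reduces to showing that the maximum size of an anticode of diameter $4$ in $X^n$ is exactly $n^2$; this gives $M=|C|=n\cdot 2^{n-1}/n^2=2^{n-1}/n$.

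For the lower bound $|A^*|\ge n^2$ I would exhibit an explicit anticode. Let $A^*$ consist of all $\word{x}\in X^n$ whose binary part has Hamming weight at most $1$; that is, words whose $*$ sits at some position $i$ and whose remaining $n-1$ coordinates carry either all zeros or a single $1$ at some $j\ne i$. A direct count gives $|A^*|=n+n(n-1)=n^2$. For the diameter I would split on the positions $i,k$ of the two stars: when $i=k$ two elements of $A^*$ differ only in their binary parts, each of weight at most $1$, so the Hamming distance is at most $2$; when $i\ne k$ the two "$*$ versus bit" coordinates contribute $2$ to the distance, and outside those two positions each binary part has weight at most $1$, contributing at most two further mismatches, for a total of at most $4$.

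The main obstacle is the matching upper bound $|A^*|\le n^2$. My approach would be to invoke the diameter perfect ternary constant weight code framework of \cite{Kro:TernDiamPerf}, in which an appropriate association scheme on $X^n$ is set up and a linear-programming / Hoffman-type eigenvalue argument returns $n^2$ as the tight bound on anticodes of diameter $4$. With both bounds combined, the anticode inequality forces $M=2^{n-1}/n$ for every diameter perfect $(n,5,n-1;M)_3$ code, completing the proof. I expect the derivation of the tight LP bound to be the principal technical step; the lower-bound construction and the distance verification are routine.
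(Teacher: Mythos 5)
Your overall skeleton is the paper's: reduce the theorem via the code--anticode inequality $|C|\cdot|A|\le|X^n|=n\cdot 2^{n-1}$ to the claim that the maximum anticode of diameter $4$ in $X^n$ has size exactly $n^2$. Your lower-bound anticode (all words of $X^n$ whose binary part has weight at most $1$) is precisely the set $B=\{\word{y}\in X^n \mid d(\mathbf{0},\word{y})\le 2\}$ used in the paper, and your count and diameter verification are correct.

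The genuine gap is the upper bound $|A|\le n^2$, which is the entire nontrivial content of the theorem, and you do not prove it. Deferring it to \cite{Kro:TernDiamPerf} does not work: that paper establishes diameter perfection of specific codes by exhibiting \emph{equality} $|C|\cdot|A|=|X^n|$ with the explicit anticode above, which requires no solution of the diametric problem; the present paper states explicitly that it is Proposition~\ref{p:max2n} that solves the diametric problem for $X^n$ with diameter $4$, i.e., the bound you need is new here, and no linear-programming or eigenvalue computation is actually carried out in your proposal. What the paper does instead is elementary but substantive. First, Lemma~\ref{l:max2n}: any $D\subset X^n$, $n\ge 16$, with minimum distance $3$ \emph{and} diameter at most $4$ satisfies $|D|\le n$; this is a three-case combinatorial analysis whose key external input is Deza's bound on nontrivial equidistant binary codes. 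Second, one takes a perfect $(n,3,n-1;2^{n-1})_3$ code $F$, forms the family of all its isometric images, and applies the generalized pigeonhole principle (Lemma~\ref{l:diri}): an arbitrary diameter-$4$ set meets each copy of $F$ in a set of minimum distance $3$ and diameter at most $4$, hence in at most $n$ words, which yields $|B|\le n\cdot|X^n|/2^{n-1}=n^2$. The case $n=8$ (which Lemma~\ref{l:max2n} does not cover) is handled separately using the known $(8,5,7;128)_3$ code together with the product bound itself. Without some argument of this kind, or an actually executed LP bound, your proof is incomplete at its central step.
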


First we need to define a diameter perfect code \cite{AhlAydKha},
the concept of which is based on the following 
generalized pigeonhole principle.

\begin{lemma}\label{l:diri}
Consider a set $S$ and a system $\mathcal{A}$ 
of subsets of $S$ of cardinality $M$
with the property that every element of $S$ belongs to a fixed 
(independent on the choice of the element) 
number of sets from $\mathcal{A}$.
Moreover, let $C\subset S$ be a set that intersects any set from $\mathcal{A}$
in at most $k$ elements. Then $$\frac{|C|}{|S|} \leq \frac{k}{M}.$$
\end{lemma}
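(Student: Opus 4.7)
The plan is a clean double-counting argument on incidences. Let $r$ denote the common number of sets of $\mathcal{A}$ containing any fixed element of $S$ (this constant is given to exist by hypothesis). I would count in two different ways the number of incident pairs $(s,A)$ with $s\in S$ and $s\in A\in\mathcal{A}$, and similarly the incident pairs restricted to $C$.

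First I would establish the identity $r|S|=M|\mathcal{A}|$ by summing the incidences $(s,A)$ with $s\in A\in\mathcal{A}$: grouping by $s$ gives $r|S|$, while grouping by $A$ gives $M|\mathcal{A}|$, since each $A$ has exactly $M$ elements. Next, I would count pairs $(c,A)$ with $c\in C$ and $c\in A\in\mathcal{A}$. Grouping by $c\in C$ yields exactly $r|C|$, because $C\subset S$ and every element of $S$ lies in $r$ members of $\mathcal{A}$. Grouping by $A$ yields $\sum_{A\in\mathcal{A}}|A\cap C|\leq k|\mathcal{A}|$ by the intersection hypothesis. Hence $r|C|\leq k|\mathcal{A}|$, and dividing by $r|S|=M|\mathcal{A}|$ gives the desired ratio $|C|/|S|\leq k/M$.

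There is essentially no obstacle here: the lemma is a soft weighted pigeonhole, and the only subtlety is to make explicit use of the uniformity hypothesis (that $r$ is independent of the chosen element of $S$), which is exactly what lets the count $r|C|$ appear on one side. I would also note that the argument tacitly assumes $\mathcal{A}$ is nonempty and $r>0$; otherwise the statement is vacuous or trivial. No additional structure beyond cardinalities is used, so the proof is a few lines.
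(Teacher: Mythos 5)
Your proof is correct and follows exactly the same double-counting argument as the paper: both establish $r|S|=M|\mathcal{A}|$ and $r|C|\leq k|\mathcal{A}|$ and divide. Your added remark about the degenerate case $r=0$ is a minor but reasonable clarification.
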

\begin{proof}
Assume that each element of $S$ occurs in $m$ subsets in $\mathcal{A}$.
Double counting the occurrences of the elements in $S$ gives
$
|S| m = M |\mathcal{A}|.
$
Similar double counting for the elements in $C$ gives
$
|C| m \leq k |\mathcal{A}|.
$
The claim follows.
\end{proof}

Let $C\subset X^n$ be a code with minimum distance $d$  
and let $A\subset X^n$ be a set of diameter $d-1$ 
(that is, the mutual distances between its elements
do not exceed $d-1$). The set $X^n$ has a group
of isometries that acts transitively on its elements, that is, 
for any $\word{x}, \word{y} \in X^n$ there is an
isometry $\phi: X^n \rightarrow X^n$ such that $\phi(\word{x}) = \word{y}$.
By applying all isometries to $A$
we get a set system $\mathcal{A}$ of codes with diameter $d-1$,
and $C$ intersects each set of $\mathcal{A}$ in at most one codeword.
By Lemma~\ref{l:diri},
\begin{equation}
\label{eq:prod}
|C|\cdot|A|\leq |X^n|.
\end{equation}
If this bound holds with equality, then $C$ is known
as a \emph{diameter perfect code}. Then $A$ and $C$ have
maximal cardinalities among codes of diameter $d-1$ and
minimum distance $d$, respectively.

Next we present some auxiliary results.

\begin{lemma}\label{l:max2n}
A code $D\subset X^n$, $n \geq 16$, with minimum distance
$3$ and diameter at most $4$ 
(that is, only the distances $3$ and $4$ are allowed 
between two different words from $D$) has cardinality $|D|\leq n$.
\end{lemma}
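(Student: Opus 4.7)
The plan is to reduce the problem to a question about binary codes via the natural bijection between $X^n$ and $F^n_{\mathrm{even}} \times [n]$ that sends $\word{x}$ to the pair $(e(\word{x}), \pi(\word{x}))$, where $\pi(\word{x}) \in [n]$ denotes the position of the $*$ in $\word{x}$. First I would observe that the restriction of $e$ to $D$ is injective: two distinct codewords with the same $e$-image necessarily have different $\pi$-values and therefore differ in exactly the two $*$-coordinates, contradicting the minimum distance $3$. Hence $|D| = |e(D)|$.

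Second, I would compute $d(\word{x}_1, \word{x}_2)$ explicitly in terms of $(\word{u}_j, i_j) = (e(\word{x}_j), \pi(\word{x}_j))$. Splitting into the cases $i_1 = i_2$ and $i_1 \neq i_2$ and using that $d(\word{u}_1, \word{u}_2)$ is necessarily even, a short case analysis forces $d(\word{u}_1, \word{u}_2) \in \{2, 4\}$ for any pair of distinct codewords. In particular $e(D)$ is a binary even-weight code of diameter at most $4$. The same analysis records precise compatibility between $\{i_1, i_2\}$ and $\mathrm{supp}(\word{u}_1 + \word{u}_2)$: if $i_1 = i_2$ then $d(\word{u}_1, \word{u}_2) = 4$; if $i_1 \neq i_2$ and $d(\word{u}_1, \word{u}_2) = 2$, then $\{i_1, i_2\}$ is not equal to $\mathrm{supp}(\word{u}_1 + \word{u}_2)$; and if $i_1 \neq i_2$ and $d(\word{u}_1, \word{u}_2) = 4$, then $\{i_1, i_2\} \subset \mathrm{supp}(\word{u}_1 + \word{u}_2)$.

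Third, after translating so that $\mathbf{0} \in e(D)$, the remaining codewords split by Hamming weight into a set $A_2$ of weight-$2$ words and a set $A_4$ of weight-$4$ words. The pairwise distance conditions translate into rigid intersection restrictions: supports of two elements of $A_4$ share $2$ or $3$ positions, and a weight-$2$ and a weight-$4$ support share $1$ or $2$ positions. In particular, each fibre $\pi^{-1}(i) \subset D$ becomes a binary equidistant code of distance $4$, whose size is bounded by a Fisher / Ray-Chaudhuri-Wilson type inequality.

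The main obstacle will be the final combinatorial step: combining the per-fibre bound with the cross-fibre compatibility conditions above to conclude that $1 + |A_2| + |A_4| \leq n$. I expect this step to rest on an incidence double count between codewords and the coordinates eligible to serve as their $*$-positions, and to invoke the hypothesis $n \geq 16$ in order to dispose of small exceptional configurations that would otherwise violate the linear bound.
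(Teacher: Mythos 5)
Your reduction is correct as far as it goes: the map $\word{x}\mapsto(e(\word{x}),\pi(\word{x}))$ is injective on $D$ (two words with the same even part differ only in their two $*$-positions, hence are at distance $2$), and your distance bookkeeping is accurate — within a fibre $\pi^{-1}(i)$ the even parts form an equidistant code of distance exactly $4$, and across fibres one gets $d(\word{u}_1,\word{u}_2)\in\{2,4\}$ together with the stated support conditions. This is essentially a reformulation of the paper's own starting point, which partitions $D$ into the sets $D_i$ of words with $*$ in coordinate $i$ and bounds each $|D_i|$ via Deza's theorem on equidistant codes (nontrivial codes of distance $4$ have size at most $8$; trivial ones at most $n/2$).

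The genuine gap is that you stop exactly where the real work begins. Your final paragraph does not give an argument: it names a hoped-for ``incidence double count'' and defers the combination of the per-fibre bound with the cross-fibre constraints. Note that the per-fibre bound $|D_i|\leq n/2$ alone is hopeless (it only gives $|D|\leq n^2/2$), so everything hinges on showing that a large fibre forces the rest of $D$ to be tiny — and I see no straightforward double count that delivers this. The paper instead runs a three-way case analysis on the internal structure of a largest fibre: (1) if some $D_i$ contains three words with mutual distances $4$, an explicit normalization shows $D\setminus D_i$ has at most $2$ words, giving $|D|\leq n/2+2$; (2) if some $D_i$ has two but not three words at mutual distance $4$, a parity argument gives $|D_i|\leq 4$ and a covering argument gives $|D\setminus D_i|\leq 12$; (3) if no fibre contains a distance-$4$ pair, then every $|D_i|\leq 2$ and a complementation argument on the first four coordinates gives $|D|\leq 8+6$. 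Each case uses $n\geq 16$ only at the very end to compare the resulting constant or $n/2+O(1)$ with $n$. Without supplying an argument of comparable substance for your $1+|A_2|+|A_4|\leq n$ claim, the proof is incomplete; the setup you built is sound and compatible with the paper's, but it does not by itself yield the bound.
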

\begin{proof}
Denote by
$D_i$ the set of words in $D$ with $*$ in the $i$th coordinate.
We shall first determine an upper bound on $|D_i|$.
We delete the $i$th coordinate and extend $D_i$ with a parity bit
to obtain an equidistant binary code with length $n$ and distance $4$. 
We call the code trivial if each coordinate has $|D_i|$ or $|D_i|-1$
equal values, and nontrivial otherwise. Deza \cite{D:73} showed that the
size of an nontrivial equidistant binary code with distance $2k$
is at most $k^2+k+2$ (which is 8 for $k=2$). The size of a trivial
equidistant binary code is easily seen to be at most $n/k$, 
that is, $n/2$ here. Combining these two cases, we get the bound 
$|D_i|\leq n/2$ for $n \geq 16$. 


If $|D_i|\leq 1$ for every $i=1,\ldots,n$, then obviously
$|D| = \sum_{i=1}^n |D_i| \leq n$. The rest of the proof, where we assume
that $|D_i|>1$ for at least one value of $i$, is divided
into three subcases.

(1) \emph{For some $i$, three of the words in $D_i$ have mutual distances
  $4$}.  Without loss of generality, $i=1$ and the three codewords are 
  $*110\cdots0$, $*00110\cdots0$ and $*0000110\cdots0$.
By looking at the distances to these three words, we see that
the only possible codewords of $D\setminus D_1$ have the
form $000\cdots 0*0\cdots$ and $100\cdots 0*0\cdots$. Moreover,
a consideration of the mutual distances between those $2(n-1)$ words reveals that
no more than two can be taken into a set with mutual distances at least $3$.  
So, $|D|\leq |D_1|+2 \leq n/2+2\leq n$ for $n \geq 16$.

(2) \emph{For some $i$, $D_i$ contains two words 
at distance $4$ from each other but no three such words}.
Now $|D_i|\leq 4$, since with size greater than 4, there
would be at least three words with the same parity in the binary
part and these three words would have mutual distances 4.

Without loss of generality, $i=1$ and $D_1$ contains the words 
$*00000\cdots0$ and $*11110\cdots0$.
Readily, every word in $D \setminus D_1$ must have the values 
$0,0,1,1$; $0,0,1,{*}$; or $0,1,1,{*}$, in an arbitrary order,
in coordinates $2$ to $5$ and cannot have 1s in coordinates $6$ to $n$.

It follows that a word in $D \setminus D_1$ is at distance $1$ 
from at least one binary word
having exactly two $1$s in coordinates $2$ to $5$ 
and 0s in coordinates $6$ to $n$. With two possible
values in the first coordinate, 
the number of such binary words is $2\binom{4}{2} = 12$.
Since no two words in $D \setminus D_1$ can be at distance
1 from the same word, we have $|D \setminus D_1|\leq 12$,
so $|D|= |D \setminus D_1| + |D_1| \leq 12 + 4 \leq n$.

(3) \emph{For every $i$, $D_i$ does not
contain words at mutual distance $4$}.
Now $|D_i| \leq 2$, and we may assume, without loss of generality, 
that $|D_1| = 2$ and $D_1=\{*0000\cdots0,*1110\cdots0\}$.
Let $D' = D \setminus (D_1 \cup D_2 \cup D_3 \cup D_4)$.
No word in $D'$ can have 1s in coordinates $5$ to $n$,
and no such word can have the values 000 and 111 in
coordinates 2 to 4. There are then 12 possible values in the
first four coordinates which come in six pairs of
complements. Since the codewords of such a pair cannot
belong to different sets $D_i$ (otherwise the mutual distance
would be $4+2=6$), we get that
$|D| = |D_1 \cup D_2 \cup D_3 \cup D_4| + |D'| \leq 2\cdot4 +6 \leq n$.
\end{proof}

\begin{proposition}\label{p:max2n}
If $B\subset X^n$, $n=2^m \geq 8$, is a set of diameter at 
most\/ $4$, then $|B|\leq n^2$. The inequality is tight, that is,
a set of diameter\/ $4$ and cardinality $n^2$ exists.
\end{proposition}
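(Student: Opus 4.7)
For the lower bound, take $B_0 = \{\word{x} \in X^n : d(\word{x}, \word{b}) \leq 2\}$ for any fixed $\word{b} \in F^n$. For each of the $n$ choices of the $*$-position, the binary part of $\word{x}$ may either agree with $\word{b}$ in all $n-1$ non-$*$ coordinates or differ in exactly one of them, contributing $n$ elements per position, so $|B_0| = n^2$. The diameter is at most $2 + 2 = 4$ by the triangle inequality, which establishes tightness.

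For the upper bound, the plan is to partition $X^n$ into $n$ subcodes $C_1, \ldots, C_n$, each of minimum distance at least $3$. Given such a partition, each intersection $C_k \cap B$ has minimum distance $\geq 3$ and inherits diameter $\leq 4$ from $B$, so Lemma~\ref{l:max2n} gives $|C_k \cap B| \leq n$; summing over $k$ yields $|B| \leq n \cdot n = n^2$.

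To construct the partition, I would exploit the algebraic structure $n = 2^m$. A natural first attempt is to assign each $\word{x}$ to the coset of $e(\word{x})$ modulo the extended Hamming code $H \subset F^n$, which gives $n$ even-weight cosets. Pairs $\word{x} \neq \word{y}$ with $*_x = *_y$ in the same coset then satisfy $d_X(\word{x},\word{y}) \geq 3$, because $d_H(e(\word{x}), e(\word{y})) \geq 4$ forces this. However, this is not sufficient: pairs with $*_x \neq *_y$ and $e(\word{x}) = e(\word{y})$ are at $X^n$-distance $2$ yet in the same coset, and elements from a weight-$2$ coset of $H$ can also be at $X^n$-distance $2$ across cosets. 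Thus the naive coset assignment must be refined by a secondary invariant involving the $*$-position, while the total class count is held at $n$. The hard part will be constructing this refinement, essentially equivalent to producing a ternary $1$-perfect-like code in $X^n$ of size $2^{n-1}$ whose translates tile $X^n$ into $n$ classes of minimum distance $\geq 3$; the existence of such a structure for $n = 2^m$ should follow from the Steiner quadruple system $S(3,4,n)$ formed by the weight-$4$ codewords of $H$.
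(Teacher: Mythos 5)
Your existence construction and your overall strategy for the upper bound --- covering $X^n$ by $n$ codes of minimum distance $3$ and size $2^{n-1}$ and applying Lemma~\ref{l:max2n} to each piece --- are exactly the paper's idea, but the proposal has two genuine gaps. First, the partition is never actually constructed: you correctly observe that the naive assignment by the coset of $e(\word{x})$ modulo the extended Hamming code fails on pairs with $e(\word{x})=e(\word{y})$, and you then defer the needed ``refinement'' to a structure whose existence you only assert ``should follow'' from the Steiner quadruple system. That structure is precisely a perfect $(n,3,n-1;2^{n-1})_3$ code, whose existence is a known but nontrivial theorem (Svanstr{\"o}m, van Lint--Tolhuizen) that the paper cites rather than reproves. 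Moreover, even granting one such code $F$, you would still have to tile $X^n$ by $n$ pairwise disjoint isometric copies of it; the paper sidesteps this extra demand by taking the entire orbit of $F$ under the transitive isometry group and invoking the double-counting Lemma~\ref{l:diri}, which only requires that every point lie in a constant number of the sets, not that the sets be disjoint. You should either prove that the tiling exists or switch to the orbit argument.

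Second, your argument cannot cover $n=8$: Lemma~\ref{l:max2n} is stated (and its case analysis only works) for $n\geq 16$, while the proposition claims the bound for all $n=2^m\geq 8$. The paper treats $n=8$ separately, using the known $(8,5,7;128)_3$ code together with the anticode inequality (\ref{eq:prod}) to conclude that the radius-$2$ ball of size $n^2=64$ is already extremal. Without some substitute for Lemma~\ref{l:max2n} at $n=8$, your proof is incomplete for that case.
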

\begin{proof}
The set 
\begin{equation}
\label{eq:b}
B = \{ \word{y} \in X^n \mid d(\mathbf{0}, \word{y}) \leq 2 \}
\end{equation}
has diameter $4$ and cardinality $n^2$ and thereby 
proves the existence part.

For $n=8$, an $(8,5,7;128)_3$ code $C$ exists~\cite{Svanstr:PhD,OstSva2002}. 
With $B$ given by (\ref{eq:b}), $|C|\cdot |B| = |X^8|$, so $B$ has maximum 
cardinality in this case by (\ref{eq:prod}).
(Actually, this argument works for all $m$ for which 
$(n=2^m,5,n-1;2^{n-1}/n)_3$ codes are known to exist.)

For $n\geq 16$, let $F$ be an $(n=2^m,3,n-1;2^{n-1})_3$ code
\cite{Svanstr:PhD,Svanstr:1999,vLinTol:1999}.
Form $\mathcal{A}$
by applying all isometries to $F$, whereby the sizes of
the sets in $\mathcal{A}$ is $M=2^{n-1}$.
By Lemma~\ref{l:max2n}, an arbitrary set $B \subset X^n$ with diameter at most 4 
intersects each set of $\mathcal{A}$ in at most $k=n$ codewords. 
An application of Lemma~\ref{l:diri} with $C=B$ then gives that 
$|B|/(n2^{n-1}) \leq n/2^{n-1}$, that is, $|B| \leq n^2$.
\end{proof}

Now Theorem~\ref{t:diperfect} is obvious: 
use (\ref{eq:b}) as the set $A$ in the definition of diameter perfect codes.
Then $|C| = |X^n|/|A| = 2^{n-1}/n$.
In particular, a hypothetical diameter perfect code $C$ in $X^{16}$ would meet 
$|C|= 2^{15}/16 = 2048$.
By Theorem~\ref{th:16}, such a code does not exist, which proves Corollary~\ref{cor:16}.

Proposition~\ref{p:max2n} solves the so-called diametric problem
for the metric space $X^n$, $n=2^m\geq 8$ and diameter $4$. 
For the Johnson space and the $q$-ary Hamming space, the diametric problem
for an arbitrary diameter was completely solved in 
\cite{AhlKha:97:intersection, AhlKha:98:hamming}.

\end{document}


\begin{nothing}
\begin{lemma}\label{l:max4}
If and $B\subset X^n$, $n>5$ is a set of diameter at most $4$, then $|B|\geq n^2$.
\end{lemma}
\begin{proof}

1) $B$ contains three words with $*$ in the same coordinate and with mutual distances $2,4,4$.
W.l.o.g., 
$B \ni 
\bar a = {*}111000...0, 
\bar b = {*}000100...0,
\bar c = {*}000010...0$.

(a) Words with $*$ in the first coordinate.

(b) $?[{*}0...0]$ Words with $*$ in the $i$th coordinate, $i\geq 2$,
zero or one in the $1$st coordinate,
and zeros in the other coordinates, 
e.g., $100000{*}0...0$

(c) $?[001][{*}0...0]$ Words with $*$ in the $i$th coordinate, $i\geq 5$,
zero or one in the $1$st coordinate,
one in the coordinate $2$, $3$, or $4$ and zeros in the other coordinates, e.g., $100100{*}0...0$.

(d) $?[011][1*]0...0$ Zero or one in the $1$st coordinate, 
$011$, $101$, or $110$ in the coordinate $2$, $3$, $4$,
$1*$ or $*1$ in the coordinate $5$, $6$,
and zeros in the other coordinates.

(e) Zero or one in the $1$st coordinate, 
$*$, $1$, $0$ in the coordinate $2$, $3$, $4$,
in an arbitrary order, 
and zeros in the other coordinates.

$B$ has at most $(n^2-n+2)/2$ words of type (a).
There are $2(n-1)$ words of type (b).
There are $6(n-4)$ words of type (c), but $B$ contains at most half of them, $3(n-4)$.

\end{proof}

(2) \emph{$D_1$ contains three words with mutual distances
$4$, $3$, and $3$}. W.l.o.g., 
${*}110000..., {*}001100..., {*}000010... \in D_i$.
It is not difficult to see that $D\setminus D_1$ can contain only the words
of the following two types.

a) at most one word with $n-1$ zeroes; at most one word with one $1$, in the first coordinate, and $n-2$ zeroes;

b) at most $8$ of $24$ words of type $abcd0...0$ from $X^n$
where 
$a \in \{0,1\}$, 
$b,c\in \{01,0{*},10,{*}0\}$, 
$d \in \{0,*}$
(indeed, every three of these words 
with the same zero coordinates have mutual distances
$2$; hence, at most one of them is in $D$).

So, $|D\setminus D_1| \leq 10$.  If $|D_1|> 4$, 
then $D_1$ contains at least three words with 
the same parity of the number of ones, 
which implies the case (1) takes place. 
Otherwise, $|D|\leq 12< n$.
\end{nothing}